\undefined \DeclareGraphicsRule{*}{eps}{*}{} \else
\newtheorem{definition}{Definition}
\newtheorem{lemma}{Lemma}
\newtheorem{theorem}{Theorem}
\newtheorem{observation}{Observation}
\numberwithin{figure}{section} \numberwithin{definition}{section}
\numberwithin{table}{section} \numberwithin{definition}{section}
\numberwithin{observation}{section} \numberwithin{lemma}{section}
\numberwithin{theorem}{section}
\begin{document}

\title{
{The crossing number of pancake graph $P_4$ is six} \footnote{The
research is supported by NSFC (60973014, 61170303)}
\author{
\ Yuansheng Yang\footnote {corresponding
author's email : yangys@dlut.edu.cn} , \ Bo Lv, \ Baigong Zheng, \ Xirong Xu, \ Ke Zhang \\
School of Computer Science and Technology\\
Dalian University of Technology, Dalian, 116024, P.R. China\\
%
}}

\date{}
\maketitle
\begin{abstract}

The {\it crossing number} of a graph $G$ is the least number of
pairwise crossings of edges among all the drawings of $G$ in the
plane. The pancake graph is an important topology for
interconnecting processors in parallel computers. In this paper, we
prove the exact value of the crossing number of pancake graph $P_4$
is six.

\bigskip
\noindent {\bf Keywords:} {\it Crossing number};  {\it Drawings}
;{\it Pancake graph};
\end{abstract}

\section{Introduction}

The notion of {\it crossing number} is a central one for Topological
Graph Theory with long history, which means the minimum possible
number of edge crossings in a drawing of graph $G$ in the plane. In
recent years, because of its applications in various fields such as
discrete and computational geometry, VLSI theory, wiring layout
problems, and in several other areas of theoretical computer
science, the crossing number problem has been studied extensively by
mathematicians including Erd\H{o}s, Guy, Tur\'{a}n and Tutte,et
al.(see \cite{EG73,Guy60,Turan77,Tutte70}) However, the
investigation on the crossing number of graphs is an extremely
difficult problem. In 1973, Erd\H{o}s and Guy wrote, ``Almost all
questions that one can ask about crossing numbers remain unsolved.''
Actually, Garey and Johnson \cite{Garey83} proved that computing the
crossing number is NP-complete. Also, it's not surprising that the
exact crossing numbers are known only for a few families of
graphs(see \cite{Bruce95,Bruce07}). In most cases, to give the upper
and lower bounds is a more practical way(see
\cite{YangYS03,YangYS02,LinXH09}). As to a nice drawing of a graph
with the number of crossings that can hardly be decreased, it is
very difficult to prove that the number of crossings in this drawing
is indeed the crossing number of the graph we studied.

The pancake graph was proposed by Akers and Krishnameurthy in
\cite{Akers89} as a special case of Cayley graphs. It not only
possesses several attractive features just like hypercubes, such as
symmetry properties and high fault tolerant, but also offers three
significant advantages over hypercubes: a lower degree, a smaller
diameter and average diameter. Therefore, there are more and more
research about pancake graphs recently. In \cite{LHH05}, Lin, Huang
and Hsu proved that the n-dimensional pancake graph $P_n$ is super
connected if and only if $n\neq 3$. In addition,  Deng and Zhang
proved that the automorphism group of the pancake graph $P_n$ is the
left regular representation of the symmetric group $S_n$ for
$n\geq5$ in \cite{DZ12}. More research about pancake graph can be
found in \cite{C11,LTHHH09,NB09,S05,DZ12}.

In \cite{SV94}, S\'{y}kora and Vrt'o proved approximative values of
crossing number of the pancake graphs. However, their results are
valuable only when the dimension $n$ is large enough. Yet there is
little study of the exact crossing number of pancake graphs when $n$
is small, which is of theoretical importance and practical value. In
this paper, we prove that the crossing number of pancake graph $P_4$
is exactly six.

\section{Notations and basic lemmas}

\indent \indent Let $G$ be a simple connected graph with vertex set
$V(G)$ and edge set $E(G)$. For $S\subseteq E(G)$, let $[S]$ be the
subgraph of $G$ induced by $S$. Let $P_{v_1v_2\cdots v_n}$ be the
\textit{path} traversing from $v_1$ to $v_n$ with $n$ vertices. Let
$C_{v_1v_2\cdots v_nv_1}$ be the \textit{circle} with $n$ vertices
from $v_1$ to $v_n$.

A drawing of $G$ is said to be a {\it good} drawing, provided that
no edge crosses itself, no adjacent edges cross each other, no two
edges cross more than once, and no three edges cross in a point. It
is well known that the crossing number of a graph is attained only
in {\it good} drawings of the graph. So, we always assume that all
drawings throughout this paper are good drawings.

For a drawing $D$ of a graph $G$, let $\nu(D)$ be the number of
crossings in $D$. In a drawing $D$, if an edge is not crossed by any
other edge, we say that it is {\it clean} in $D$.

For two disjoint subsets of an edge set $E$, say $A$ and $B$, the
number of the crossings formed by an edge in $A$ and another edge in
$B$ is denoted by $\nu_D(A,B)$ in a drawing $D$. The number of the
crossings that involve a pair of edges in $A$ is denoted by
$\nu_D(A)$. Then $\nu_D(A\cup B)=\nu_D(A)+\nu_D(B)+\nu_D(A,B)$ and
$\nu(D)=\nu_D(E)$.

\begin{definition}$($Pancake Graphs$)$
The n-dimensional pancake graph, denoted by $P_n$ and proposed by
Akers and Krishnameurthy, is a graph consisting of $n!$ vertices
labelled with $n!$ permutations on a set of the symbols $1, 2,
\cdots , n$. There is an edge from vertex $i$ to vertex $j$ if and
only if $j$ is a permutation of $i$ such that $i = i_{1}i_{2}\cdots
i_{k}i_{k+1}\cdots i_n$ and $j = i_k\cdots i_{2}i_{1}i_{k+1}\cdots
i_n$, where $2\leq k \leq n$.
\end{definition}
The pancake graphs $P_2, P_3$ and $P_4$ are shown in Figure
\ref{Figure P_n} for illustration.
\begin{figure}[ht]
\centering
\includegraphics[scale=1.0]{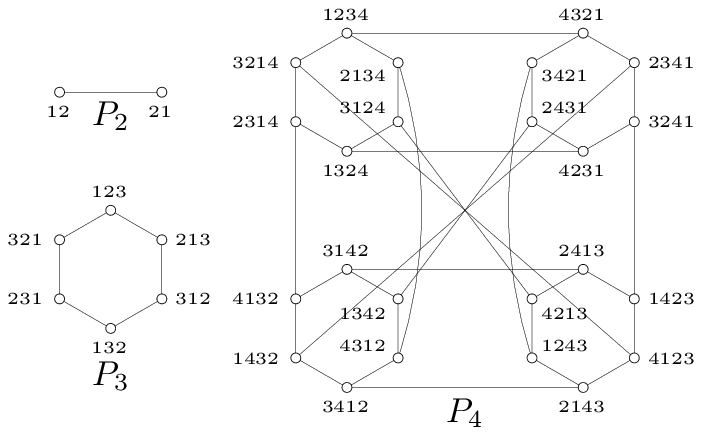}
\includegraphics[scale=1.0]{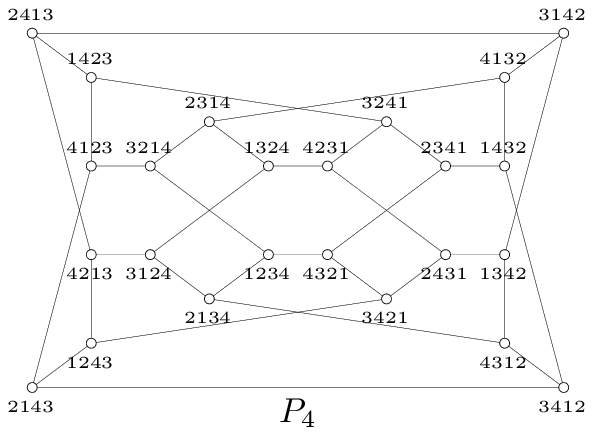}
\caption{\small{some drawings of $P_n$}} \label{Figure P_n}
\end{figure}

There are four 6-cycles $C_i (1\leq i\leq 4)$ in $P_4$. For $1\leq
i\leq 4$, the subgraph of $P_4$ induced by $V(P_4)-V(C_i)$ is
homeomorphic to graph $G_{12}$ shown in Figure \ref{Figure D_{120}}.

\begin{figure}[ht]
\centering
\includegraphics[scale=1.0]{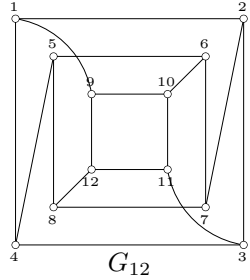}
\caption{\small{A drawing of $G_{12}$}} \label{Figure D_{120}}
\end{figure}

\begin{lemma}\label{Lemma D_{12}}
  Let $D$ be an arbitrarily drawing
of $G_{12}$, then $\nu(D)\geq 2.$
\end{lemma}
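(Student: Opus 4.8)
My plan is to prove the stronger statement that $G_{12}-e$ is non-planar for \emph{every} edge $e$ of $G_{12}$, and then deduce the lemma. The deduction is immediate: since $G_{12}$ contains the non-planar graph $G_{12}-e$, no drawing of $G_{12}$ can be plane, so $\nu(D)\ge 1$ always; and if some drawing $D$ had exactly one crossing, say between the edges $e$ and $f$, then erasing $e$ from $D$ would leave a crossing-free drawing of $G_{12}-e$, which is impossible. Hence $\nu(D)\ge 2$ for every drawing $D$.

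For the main claim I use the structure of $G_{12}$ exhibited in the figure: one may label its vertices so that $G_{12}$ is the disjoint union of three $4$-cycles $A=a_1a_2a_3a_4a_1$, $B=b_1b_2b_3b_4b_1$, $C=c_1c_2c_3c_4c_1$ together with six more edges, namely $a_1b_3$ and $a_3b_1$ between $A$ and $B$, $a_2c_3$ and $a_4c_1$ between $A$ and $C$, and $b_2c_4$ and $b_4c_2$ between $B$ and $C$; on each $4$-cycle the two endpoints going to a given one of the other two cycles are antipodal. This graph is $3$-regular and has an automorphism group containing the group $S_3$ permuting $\{A,B,C\}$, the half-turn $x_i\mapsto x_{i+2}$ (indices mod $4$, $x\in\{a,b,c\}$) applied simultaneously on $A$, $B$, $C$, and a reflection of a single $4$-cycle that extends to the whole graph; together these act transitively on the twelve edges lying on $A\cup B\cup C$ and transitively on the six remaining edges. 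So it suffices to treat two cases: $e=a_1a_2$ (a $4$-cycle edge) and $e=a_1b_3$ (a connecting edge).

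In both cases I exhibit a $K_{3,3}$-minor, which by Wagner's theorem gives non-planarity. For $e=a_1b_3$ the three $4$-cycles are untouched; take as branch sets $V(A)$, $V(B)$, $\{c_1\}$, $\{c_2\}$, $\{c_3\}$, $\{c_4\}$, and check that across the bipartition $\{V(A),c_2,c_4\}\mid\{V(B),c_1,c_3\}$ every one of the nine pairs is joined by an edge of $G_{12}-e$: the pair $V(A)V(B)$ by the surviving edge $a_3b_1$, the pairs $V(A)c_1$ and $V(A)c_3$ by $a_4c_1$ and $a_2c_3$, the pairs $V(B)c_2$ and $V(B)c_4$ by $b_4c_2$ and $b_2c_4$, and the remaining four pairs by the four edges of $C$. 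For $e=a_1a_2$ the cycle $A$ becomes a path, so use the branch sets $V(A)$ (still connected), $V(B)$, and the four singletons $\{c_i\}$; the same nine edges are available — now $a_1b_3$ may also be used for the $V(A)V(B)$ pair — so again we get a $K_{3,3}$-minor. This completes the argument.

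The step that needs care is the bookkeeping in these two minors: one must confirm that deleting $e$ leaves all six branch sets connected and pairwise disjoint and all nine ``cross'' edges present. The only delicate point is that the minor uses just one of the two parallel connecting edges between $A$ and $B$, which is exactly why it is robust under deleting a connecting edge; dually, it uses all four edges of $C$ and all of $A\cup B$ except for one spare edge, which is why it is robust under deleting a $4$-cycle edge. If one prefers not to invoke the edge-transitivity of $\operatorname{Aut}(G_{12})$, one can simply run the same verification for one edge of $A$, one of $B$, one of $C$, and one connecting edge of each of the three types — six nearly identical checks.
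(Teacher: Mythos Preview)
Your argument is correct and takes a genuinely different route from the paper's. Both proofs establish the same intermediate fact --- that $G_{12}-e$ is non-planar for every edge $e$ (equivalently, the skewness of $G_{12}$ is at least $2$) --- and then deduce $\nu(D)\ge 2$ by the standard observation that removing one edge from a drawing with a single crossing yields a plane drawing of an edge-deleted subgraph. The paper reaches skewness $\ge 2$ by a counting argument: if deleting $m$ edges makes $G_{12}$ planar, Euler's formula gives $8-m$ faces, and since $G_{12}$ has only three $4$-cycles and all other cycles have length at least six, the face--length inequality $3\cdot 4+(8-m-3)\cdot 6\le 2(18-m)$ forces $m\ge 2$. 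You instead exhibit an explicit $K_{3,3}$-minor in $G_{12}-e$, reducing to two representative edges via the edge-orbit structure of $\operatorname{Aut}(G_{12})$.

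Your minor construction is clean and checks out: contracting two of the three $4$-cycles and leaving the third as four singleton branch sets does produce $K_{3,3}$, and the construction only consumes one of the two $A$--$B$ connecting edges and all four edges of $C$, so it survives the deletion of either a connecting edge incident with $A$ and $B$ or a single edge of $A$. The transitivity claims are also correct (one can take as generators the simultaneous half-turn, the order-$3$ map cycling $A\to B\to C$ with a shift, and an order-$4$ map rotating $A$ while swapping $B$ and $C$); your phrase ``a reflection of a single $4$-cycle that extends'' is a bit loose, but the fallback you offer --- checking one edge of each of the six types directly --- closes that door anyway. The trade-off: the paper's Euler argument is shorter and uses only coarse girth information, while yours gives a concrete topological obstruction and avoids any cycle-length census, at the cost of a small case analysis and an appeal to symmetry.
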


\begin{proof}
 Let $m$ be the smallest number of the edges of $G_{12}$
whose deletion from $G_{12}$ results in a planar subgraph
$G^{*}_{12}$ of $G_{12}$. $G^{*}_{12}$ has $12$ vertices, $18-m$
edges. Let $D^{*}_{12}$ be a planar drawing of $G^{*}_{12}$ and $p$
denote the number of faces in $D^{*}_{12}$. Then, according to the
Euler Polyhedron Formula,
\begin {eqnarray*}
12-(18-m)+p&=&2,\\
           p&=&8-m.
\end {eqnarray*}
Since all cycles in $G_{12}$ have length at least six except for
three 4-cycles, there are at most three 4-cycles in $G^{*}_{12}$ but
no 3-cycles. By counting the number of edges of each face in
$D^{*}_{12}$, we have
\begin {eqnarray*}
            3\times4+(8-m-3)\times 6&\leq &|E(G^{*}_{12})|= 2\times (18-m),\\
                           4m&\ge &6.
\end {eqnarray*}
It follows $m\ge$ 2. Hence $\nu(D)\geq 2$.\hspace{7.40cm}
\end{proof}

For $i=1,2,3$, let
$C^{4}_{i}=C_{v_{4i-3}v_{4i-2}v_{4i-1}v_{4i}v_{4i-3}}$.

\begin{lemma}\label{Lemma D'_{12}}
Let $D$ be a drawing of $G_{12}$, where at least one pair of
$4$-cycles crosses each other, then $\nu(D)\geq 3$.
\end{lemma}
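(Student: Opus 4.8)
The plan is to argue by contradiction. Suppose $D$ is a good drawing of $G_{12}$ with $\nu(D)\le 2$ in which some pair of $4$-cycles cross; by the symmetry of $G_{12}$ we may take the crossing pair to be $\{C^4_1,C^4_2\}$ (the remaining cases being identical). By Lemma \ref{Lemma D_{12}} we then have $\nu(D)=2$. First I would recall the structure of $G_{12}$: it is cubic, the three $4$-cycles $C^4_1,C^4_2,C^4_3$ are pairwise vertex-disjoint and together cover all $12$ vertices, and the remaining $6$ edges form a perfect matching $M$ joining the three $4$-cycles, with the precise incidences read off from Figure \ref{Figure D_{120}}. I would also extract from the \emph{proof} of Lemma \ref{Lemma D_{12}} that at least $2$ edges must be deleted from $G_{12}$ to make it planar; hence no single edge can be involved in both crossings of $D$ (deleting it would otherwise leave a planar drawing of $G_{12}$ minus one edge), so the two crossings of $D$ use four distinct edges.

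The key step is a parity observation. Since $C^4_1$ and $C^4_2$ are vertex-disjoint cycles, their images in $D$ are closed curves meeting only in crossing points, each of which (the drawing being good) is a transversal crossing of exactly two edges. The total number of crossings between two closed curves in the plane is always even — this holds even if one of the curves is not simple, e.g.\ if a $4$-cycle crosses itself. As $C^4_1$ and $C^4_2$ cross at least once by assumption, they cross at least twice, and since $\nu(D)=2$ they cross exactly twice and these are \emph{all} the crossings of $D$. Consequently $C^4_3$ is clean, every edge of $M$ is clean, and each of $C^4_1$, $C^4_2$ is drawn as a simple closed curve, the two of them meeting in exactly two points $x,y$, which lie on two distinct edges of $C^4_1$ and on two distinct edges of $C^4_2$.

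Then I would run a region analysis. The two simple closed curves $C^4_1\cup C^4_2$ partition the plane into four faces, each a bigon bounded by one of the two arcs into which $\{x,y\}$ cuts $C^4_1$ and one of the two arcs into which $\{x,y\}$ cuts $C^4_2$, all four arc-pairs occurring. The clean connected $4$-cycle $C^4_3$ lies inside one face $F^{*}$, bounded by an arc $a$ of $C^4_1$ and an arc $b$ of $C^4_2$. Its matching edges are clean, hence contained in $\overline{F^{*}}$, so the two edges of $M$ from $C^4_3$ to $C^4_1$ attach to two distinct vertices of $C^4_1$ on $a$, and the two edges from $C^4_3$ to $C^4_2$ attach to two distinct vertices of $C^4_2$ on $b$. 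Combining this with the requirement that each of the two $C^4_1$–$C^4_2$ edges of $M$ have both endpoints on the boundary of a common face, and with the cyclic orders of vertices around $C^4_1$ and $C^4_2$ fixed by $G_{12}$, I would check that no assignment is consistent: in every configuration one matching edge is forced to cross $C^4_1\cup C^4_2$ or $C^4_3$, contradicting $\nu(D)=2$. Hence $\nu(D)\ge 3$.

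The parity observation is the conceptual core but is short; the real obstacle is the concluding case analysis in the last paragraph, where one must simultaneously track (i) whether $x,y$ fall on adjacent or on opposite edges of $C^4_1$ (and of $C^4_2$), which dictates how the four vertices of each $4$-cycle split between the arc $a$ (resp.\ $b$) and its complement, and (ii) the attachment pattern of $M$ relative to these arcs. I would cut the number of essentially distinct configurations down using the automorphisms of $G_{12}$ and dispose of each using the single Jordan-curve constraint that a clean edge has both ends on the boundary of one face.
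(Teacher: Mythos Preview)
Your approach is essentially the paper's: argue by contradiction, use parity to pin both crossings on $C^4_1\times C^4_2$, deduce that $C^4_3$ and all matching edges are clean, and then do a region analysis inside the face $F^{*}$ containing $C^4_3$. The paper's write-up is terser (it simply asserts that $v_1,v_3,v_6,v_8$ must lie on the boundary of the same region and that this forces a crossing, pointing to a figure), while you spell out the parity step and the four-bigon structure of $C^4_1\cup C^4_2$ explicitly.

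Where you diverge is in the endgame. You propose to track simultaneously the $C^4_3$--spokes \emph{and} the two $C^4_1$--$C^4_2$ matching edges $v_2v_7,\,v_4v_5$, together with a case split on whether $x,y$ lie on adjacent or opposite edges of each $4$-cycle. That works, but it is more than is needed, and the paper's (implicit) shortcut is worth knowing: once you have established that $v_1,v_3$ lie on the arc $a\subset C^4_1$ and $v_6,v_8$ on the arc $b\subset C^4_2$, the cyclic order of these four vertices along $\partial F^{*}$ is forced to keep $\{v_1,v_3\}$ contiguous and $\{v_6,v_8\}$ contiguous. But the matching to $C^4_3=v_9v_{10}v_{11}v_{12}$ sends $v_1\mapsto v_9$, $v_6\mapsto v_{10}$, $v_3\mapsto v_{11}$, $v_8\mapsto v_{12}$, so for the four clean spokes to reach $C^4_3$ through the annulus $F^{*}\setminus\operatorname{int}(C^4_3)$ without crossing, the order on $\partial F^{*}$ would have to \emph{interleave} $\{v_1,v_3\}$ with $\{v_6,v_8\}$. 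These two cyclic orders are incompatible, so some spoke is crossed --- contradiction. No reference to $v_2v_7$ or $v_4v_5$, and no case split on the positions of $x,y$, is required. Your preliminary observation that no single edge carries both crossings (via the skewness bound from Lemma~\ref{Lemma D_{12}}) is correct but likewise unnecessary for this argument.
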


\begin{proof}By contradiction. Suppose  $\nu(D)\leq 2$. Then
there is exact one pair of 4-cycles, say $C^4_1, C^4_2$, crosses
each other. Since $\nu(D)\leq 2$, no 4-cycle crosses itself, and
edges $v_1v_9,v_3v_{11}, v_6v_{10}$ and $v_8v_{12}$ are all clean.
Since vertices $v_9,v_{10},v_{11}$ and $v_{12}$ lie in outside of
$C^4_1$ and $C^4_2$, vertices $v_1,v_3,v_6$ and $v_8$ have to lie on
the bounder of same area. It follows at least one edge of
$v_1v_9,v_3v_{11}, v_6v_{10}$ and $v_8v_{12}$ is crossed,
$\nu(D)\geq 3$, a contradiction (See Figure\ref{Figure D_{12}}(1)).
\end{proof}

\begin{figure}[h]
\center
\includegraphics[scale=1.0]{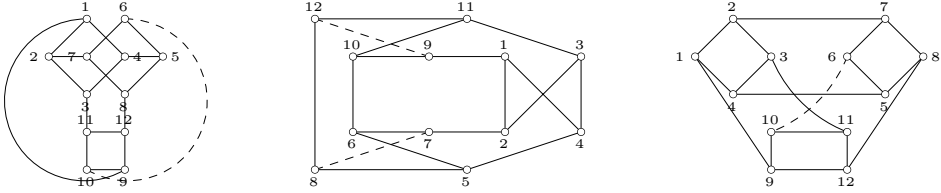}
\caption{Some drawings of $G_{12}$}\label{Figure D_{12}}
\end{figure}

\begin{lemma}\label{Lemma D''_{12}}
Let $D$ be a good drawing of $G_{12}$, where any pair of $4$-cycles
does not cross each other and any two $4$-cycles lie in the same
side of the third $4$-cycle, then $\nu(D)\geq 3$.
\end{lemma}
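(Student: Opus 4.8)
The plan is to suppose, for contradiction, that $D$ is a good drawing of $G_{12}$ satisfying the hypotheses with $\nu(D)\le 2$, and to reach a contradiction.

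First I would normalise the topology. Since no two of $C^{4}_{1},C^{4}_{2},C^{4}_{3}$ cross and none crosses itself, each $C^{4}_{k}$ is a simple closed curve. The hypothesis that, for each $k$, the other two $4$-cycles lie on one side of $C^{4}_{k}$ means (inspecting the arrangement of the three disjoint curves, whose ``dual tree'' must then be a star rather than a path) that some face of $C^{4}_{1}\cup C^{4}_{2}\cup C^{4}_{3}$ is incident with all three of them; by re-choosing the unbounded face I may assume the three $4$-cycles are pairwise exterior to one another. Then the region $R_{0}$ lying outside all three is connected, has boundary $C^{4}_{1}\cup C^{4}_{2}\cup C^{4}_{3}$, and is homeomorphic to a sphere with three holes; all $12$ vertices lie on $\partial R_{0}$. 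Call the six edges of $G_{12}$ lying on no $4$-cycle the \emph{link} edges; they form a perfect matching, two links joining each pair $C^{4}_{i},C^{4}_{j}$, and — since the two links joining $C^{4}_{i}$ to $C^{4}_{k}$ meet $C^{4}_{i}$ at antipodal vertices of that $4$-cycle, and likewise the two joining $C^{4}_{j}$ to $C^{4}_{i}$ — around each $C^{4}_{k}$ the four link-endpoints occur in the cyclic order ``to $C^{4}_{i}$, to $C^{4}_{j}$, to $C^{4}_{i}$, to $C^{4}_{j}$''.

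Writing $M$ for the set of link edges and $A=E(C^{4}_{1})\cup E(C^{4}_{2})\cup E(C^{4}_{3})$, we have $\nu(D)=\nu_{D}(M)+\nu_{D}(M,A)\le 2$. A link that leaves an endpoint into the interior of its incident $4$-cycle must cross that $4$-cycle, and a link that enters the interior of a $4$-cycle it is not incident with must cross it at least twice; because the budget is only $2$ this leaves a short list of configurations, the principal one being that \emph{all six links are drawn in $\overline{R_{0}}$}. In that case I would argue as follows. (i)~The four links incident with $C^{4}_{3}$ meet $C^{4}_{3}$ at points that alternate between ``coming from $C^{4}_{1}$'' and ``coming from $C^{4}_{2}$''; a Jordan-curve argument in the three-holed sphere $\overline{R_{0}}$ shows that, because of this alternation, these four arcs cannot all avoid one another, so there is at least one crossing among them. (ii)~For the two links joining $C^{4}_{1}$ and $C^{4}_{2}$: if one of them is uncrossed, cut $\overline{R_{0}}$ along it, obtaining an annulus whose boundary circles are $C^{4}_{3}$ and a curve built from $C^{4}_{1}$, $C^{4}_{2}$ and the cut link; the other $C^{4}_{1}$--$C^{4}_{2}$ link then becomes an arc with both ends on that composite circle, cutting off a disc whose bounding arc — again by the alternation property, now around $C^{4}_{1}$ and $C^{4}_{2}$ — carries two endpoints of links running to $C^{4}_{3}$, and those two links must cross this chord to escape the disc: two further crossings, disjoint from the one in (i). Hence $\nu(D)\ge 3$, a contradiction. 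The remaining configurations (the two $C^{4}_{1}$--$C^{4}_{2}$ links crossing each other, or some link genuinely entering an interior) are treated the same way: one bookkeeps the one or two crossings forced with a $4$-cycle, or with the other link, and combines them with the alternation obstruction for the links still inside $\overline{R_{0}}$, which again produces a third crossing.

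The step I expect to be the real obstacle is the argument just sketched for the principal case, and in particular the bookkeeping needed to be sure the three crossings it produces are genuinely distinct: one must be careful about which $C^{4}_{1}$--$C^{4}_{2}$ link one cuts along when it turns out to be crossed, and one must rule out, one by one, the handful of boundary configurations (a link dipping into an interior, or the two $C^{4}_{1}$--$C^{4}_{2}$ links forming a figure-eight) that might otherwise slip through with only two crossings. This is where the fine structure of $G_{12}$ — the alternation around each $4$-cycle, and the fact that $C^{4}_{1},C^{4}_{2},C^{4}_{3}$ are its only cycles of length $<6$ — has to be used, rather than a bare Euler-formula count, which (as in Lemma \ref{Lemma D_{12}}) yields only $\nu(D)\ge 2$.
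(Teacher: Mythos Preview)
Your normalisation step contains a genuine gap: you assert that ``none crosses itself'', but the hypothesis only forbids crossings between \emph{distinct} $4$-cycles. In a good drawing the opposite edges $v_1v_4$ and $v_2v_3$ of $C^4_1$ may well cross, and the paper devotes its entire Case~1 to this possibility. That argument is not short: with $C^4_1$ drawn as a figure-eight one can no longer speak of a three-holed sphere $\overline{R_0}$, and the paper instead chases the specific $6$-cycles $C_{v_1v_9v_{10}v_6v_7v_2v_1}$ and $C_{v_3v_{11}v_{12}v_8v_5v_4v_3}$ of $G_{12}$, using the remaining budget of one crossing to pin down which of them is embedded, and then locating $v_{12}$ to force a third crossing. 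Nothing in your sketch covers this, and your topological framework does not adapt to it without substantial extra work.

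For the case where no $4$-cycle self-crosses, your route is genuinely different from the paper's and, where you carry it out, correct: the alternation observation is right, your step~(i) is a valid Jordan-curve argument in the three-holed sphere, and when one $C^4_1$--$C^4_2$ link is clean your annulus-cutting in step~(ii) does produce two further crossings on the other $C^4_1$--$C^4_2$ link, disjoint from the crossing in~(i) since that one lies entirely among links incident to $C^4_3$. The paper's Case~2 instead first pigeonholes to find a pair of $4$-cycles whose two connecting links and the two $4$-cycles themselves form a clean $8$-cycle $C_{v_1v_2v_7v_8v_5v_4v_1}$ (with $v_3,v_6$ inside), then places $C^4_3$ relative to this picture and reads off directly that $v_3v_{11}$ and $v_6v_{10}$ together carry at least three crossings. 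Their argument is less conceptual but avoids precisely the bookkeeping you flag as the obstacle --- in particular the sub-case where the two $C^4_1$--$C^4_2$ links cross each other, which in your framework still needs its own paragraph.
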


\begin{proof}By contradiction. Suppose  $\nu(D)\leq 2$.

\noindent  \textbf{Case 1.} There is at least one 4-cycle, say
$C^{4}_{1}$, crosses itself. Without loss of generality, we may
assume that $v_1v_4$ crosses $v_2v_3$. We show this situation in
Figure\ref{Figure D_{12}}(2). Since $\nu(D)\leq 2$, at least one
cycle of the disjoint cycles $C_{v_1v_9v_{10}v_6v_7v_2v_1}$ and
$C_{v_3v_{11}v_{12}v_{8}v_5v_4v_3}$, say cycle
$C_{v_1v_9v_{10}v_6v_7v_2v_1}$, does not cross itself. And at least
one cycle of cycles $C_{v_4v_3v_{11}v_{10}v_9v_1v_4}$ and
$C_{v_3v_{4}v_{5}v_{6}v_7v_2v_3}$, say cycle
$C_{v_4v_3v_{11}v_{10}v_9v_1v_4}$, does not cross itself, for they
only have one common edge. Considering the possible locations for
vertex $v_{12}$, we find at least one edge of
$\{v_{12}v_{11},v_{12}v_{9},v_{12}v_8,v_8v_7\}$ is crossed, since
edges $v_{11}v_{12}$, $v_9v_{12}$ and path $P_{v_{12}v_8v_7}$ can
not be in the same region. Hence, cycle
$C_{v_3v_{4}v_{5}v_{6}v_7v_2v_3}$ can not cross itself. Since
$\nu(D)\leq 2$, path $P_{v_5v_8v_{12}v_{11}}$ can be crossed at most
one time, edge $v_8v_{12}$ has to lie in outside of cycle
$C_{v_5v_4v_3v_{11}v_{10}v_6v_5}$. It follows edges $v_{12}v_9$ and
$v_8v_7$ are both crossed, $\nu(D)\geq 3$, a contradiction.

\noindent  \textbf{Case 2.} No 4-cycle crosses itself. Since
$\nu(D)\leq 2$, at least one of all the three pairs of 4-cycles, say
$C^4_{1}$ and $C^4_{2}$, satisfies the following conditions: the
edges between that pair of 4-cycles do not cross each other, and
they do not cross the pair of 4-cycles either. By symmetry, we may
assume $v_3,v_6$ lie in inside of cycle $C_{v_1v_2v_7v_8v_5v_4v_1}$
(See Figure\ref{Figure D_{12}}(3)). Since any pair of 4-cycles does
not cross each other and any two 4-cycles lie in the same side of
the third 4-cycle, 4-cycle $C^4_{3}$ has to lie in outside of cycle
$C_{v_2v_7v_6v_5v_4v_3v_2}$ or in inside of cycle
$C_{v_2v_7v_6v_5v_4v_3v_2}$. By symmetry, we may assume 4-cycle
$C^4_{3}$ lies in outside of cycle $C_{v_2v_7v_6v_5v_4v_3v_2}$. Then
edges $v_3v_{11}$ and $v_6v_{10}$ are crossed. Since $\nu(D)\leq 2$,
edges $v_1v_9$ and $v_8v_{12}$ are clean. It follows at least one
edges of $v_3v_{11}$ and $v_6v_{10}$ is crossed at least two times,
$\nu(D)\geq 3$, a contradiction.
\end{proof}

\section{Crossing number of $P_4$}

 \indent In Figure 3.1, we show a drawing of $P_4$ with 6
crossings. Hence, we have:

\begin{lemma} \label{lemma upper bound of P_4}
$cr(P_4)\leq 6$.
\end{lemma}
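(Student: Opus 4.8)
The plan is purely constructive. Since $cr(P_4)$ is the minimum of $\nu(D)$ over all good drawings $D$ of $P_4$, it suffices to produce a single good drawing $D_0$ with $\nu(D_0)=6$; then $cr(P_4)\le\nu(D_0)=6$. The drawing $D_0$ is the one displayed in Figure 3.1, so the whole argument reduces to certifying that this picture is genuinely a drawing of $P_4$ and that it has exactly six crossings.

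First I would pin down the combinatorial data the picture must realize. The graph $P_4$ has $4!=24$ vertices, one per permutation of $\{1,2,3,4\}$, and it is $3$-regular, hence has $36$ edges, split into three perfect matchings according to whether an edge comes from reversing the length-$2$, length-$3$, or length-$4$ prefix. I would label each vertex of Figure 3.1 with its permutation and check, matching by matching, that each drawn edge joins precisely the pair of permutations prescribed by Definition 1 — in particular that the four $6$-cycles $C_1,\dots,C_4$ and the copies of $G_{12}$ discussed above occur as asserted, which serves as a convenient internal consistency check on the labelling.

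Next I would verify that $D_0$ is a good drawing (no edge crossing itself, no adjacent edges crossing, no two edges crossing twice, no three edges through one point); this is immediate from inspection once the edges are identified. Finally I would enumerate the crossings of $D_0$ explicitly, listing the unordered pairs of edges involved, and confirm that there are exactly six and no more. A robust way to make this last step airtight is to point to a small set of edges whose removal leaves a visibly planar subgraph, which forces every crossing to involve one of those edges and makes an undercount impossible. Combining these checks gives $\nu(D_0)=6$ and hence $cr(P_4)\le 6$.

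The only real obstacle is bookkeeping rather than mathematics: one must be certain that Figure 3.1 is drawn accurately — that all $36$ required adjacencies are present and no spurious edge has been added — and that no crossing has been missed in a congested region of the figure. Beyond this careful verification there is nothing deep to prove, and the matching lower bound $cr(P_4)\ge 6$ is the part that will require the lemmas on $G_{12}$ established above.
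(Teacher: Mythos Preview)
Your proposal is correct and follows exactly the paper's approach: the paper simply exhibits the drawing in Figure~3.1 with six crossings and concludes $cr(P_4)\le 6$. Your write-up just spells out more carefully the verification (that the figure really represents $P_4$, that the drawing is good, and that the crossing count is six) which the paper leaves implicit in the picture.
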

\begin{figure}[ht]
\centering
\includegraphics[scale=1.0]{P42.eps}
\includegraphics[scale=1.0]{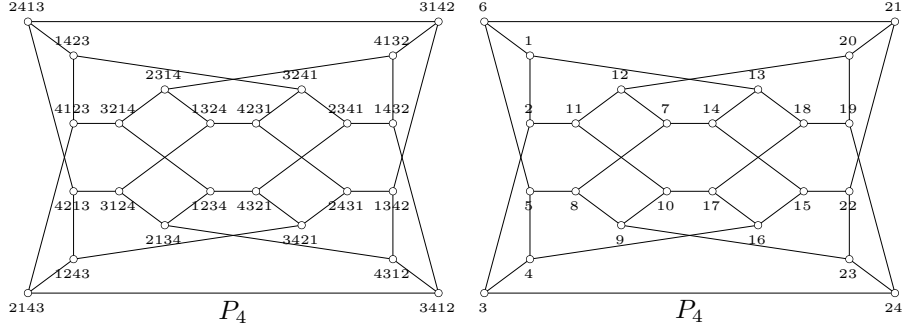}
\caption{\small{A good drawing of $P_4$ with 6 crossings}}
\end{figure}

In the rest of this section, we shall prove that the value of
$cr(P_4)$ is exactly equal to 6. We rename the vertices of $P_4$ as
shown in Figure 3.1.

For $i=1,2,3,4$, let

$$\begin{array}{rlll}
C^{i}_{p_4}&=C_{v_{6i-5}v_{6i-4}v_{6i-3}v_{6i-2}v_{6i-1}v_{6i}v_{6i-5}},\\
V^{i}_{p_4}&=V(C^{i}_{p_4}),\\
E^{i}_{p_4}&=E(C^{i}_{p_4}),\\
E^{i,j}_{p_4}&=\{uv:u\in V^{i}_{p_4}\wedge v\in V^{j}_{p_4} \},\\
E'^{i}_{p_4}&=E^{i}_{p_4}\cup\bigcup_{(1\leq j\leq 4)\wedge j\neq i} E^{i,j}_{p_4},\\
\overline{E'^{i}_{p_4}}&=E(P_4)-E'^{i}_{p_4}.\\
\end{array}$$

For convenience, we abbreviate
$$\begin{array}{rlll}
C_{i}&=C^{i}_{p_4},   V_{i}&=V^{i}_{p_4},
E_{i}&=E^{i}_{p_4},\\
E_{i,j}&=E^{i,j}_{p_4},   E'_{i}&=E'^{i}_{p_4},
\overline{E'_{i}}&=\overline{E'^{i}_{p_4}}.\\
\end{array}$$
Then we have the following important observation.
\begin{observation}
For $1\leq i\neq j\leq4$, let $u_av_a,u_bv_b\in E_{i,j}$,
$u_a,u_b\in V_i$ and the path between $u_a$ and $u_b$ is
$P_{u_au_cu_du_b}$ $($or $P_{u_bu_cu_du_a})$ on $C_i$. Then $u_c$
and $u_d$ are connected to different $6$-cycles except for $C_i$ and
$C_j$ $($See Figure 3.1$)$.
\end{observation}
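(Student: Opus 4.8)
The plan is to avoid arguing about general drawings and instead make the relevant part of the structure of $P_4$ completely explicit, then read the statement off it. A vertex of $P_4$ is a permutation $i_1i_2i_3i_4$ of $\{1,2,3,4\}$; writing $R_k$ for the prefix-reversal of length $k$, its three neighbours are $R_2(i_1i_2i_3i_4)=i_2i_1i_3i_4$, $R_3(i_1i_2i_3i_4)=i_3i_2i_1i_4$ and $R_4(i_1i_2i_3i_4)=i_4i_3i_2i_1$. First I would pin down the skeleton: since only $R_4$ changes the last symbol, the $6$ vertices of each ``last-symbol class'' $C^{(\sigma)}=\{\,i_1i_2i_3i_4:i_4=\sigma\,\}$ span a $2$-regular connected subgraph (because $R_2$, $R_3$ act on the first three coordinates as the transpositions $(1\,2)$, $(1\,3)$, which generate $S_3$), hence a $6$-cycle; these four $6$-cycles partition $V(P_4)$ and, after the renaming of Figure~3.1, are exactly $C_1,\dots,C_4$. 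Each vertex has two neighbours in its own class and one via $R_4$, so the $24$ class-edges are these $6$-cycles and the remaining $12$ edges are exactly the $12$ involutions $R_4$. Since $R_4$ sends $i_1i_2i_3i_4$ to $i_4i_3i_2i_1$ it joins $C^{(i_4)}$ to $C^{(i_1)}$, and for each $\sigma\neq\tau$ exactly two permutations have $i_4=\sigma$, $i_1=\tau$, so $|E_{\sigma,\tau}|=2$, matching $E_{i,j}$.

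The one genuinely new computation I need is: travelling along a cycle $C_i$ one alternates $R_2$ and $R_3$, so any three consecutive steps form the word $R_2R_3R_2$ or $R_3R_2R_3$, and both of these send $i_1i_2i_3i_4$ to $i_1i_3i_2i_4$ — the swap of the two middle coordinates (this is the braid relation $(1\,2)(1\,3)(1\,2)=(1\,3)(1\,2)(1\,3)=(2\,3)$ in $S_3$); in particular the first symbol is fixed. Hence any two antipodal (distance-$3$) vertices of $C_i$ share their first symbol, so their $R_4$-edges land in the same class. Moreover two consecutive steps ($R_2R_3$ or $R_3R_2$) cyclically rotate $(i_1,i_2,i_3)$, so the first symbols of the three antipodal pairs of $C_i$ are the three elements of $\{1,2,3,4\}\setminus\{i\}$, each occurring once. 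Consequently, for $j\neq i$, the two edges of $E_{i,j}$ are attached to $C_i$ along the single antipodal pair of $C_i$ whose common first symbol is $j$.

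Now the Observation drops out. With $u_av_a,u_bv_b\in E_{i,j}$ and $u_a,u_b\in V_i$, the set $\{u_a,u_b\}$ is that antipodal pair, so $u_a,u_b$ are at distance $3$ on $C_i$ and the $C_i$-path between them genuinely has the form $P_{u_au_cu_du_b}$ with two interior vertices $u_c,u_d$; those two interior vertices lie in the other two antipodal pairs of $C_i$, one in each, and those pairs have as first symbol the two distinct members of $\{1,2,3,4\}\setminus\{i,j\}$. Hence $u_c$ and $u_d$ have different first symbols and their $R_4$-edges run to two different $6$-cycles $C_k,C_l$ with $\{k,l\}=\{1,2,3,4\}\setminus\{i,j\}$ — that is, $u_c$ and $u_d$ are connected to different $6$-cycles, each distinct from $C_i$ and from $C_j$. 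The same holds for the backwards reading $P_{u_bu_cu_du_a}$ and for the other length-$3$ path between $u_a$ and $u_b$, so the ambiguity in the phrasing is harmless. As a shortcut one can also note that relabelling the symbols by any $\pi\in S_4$ is an automorphism of $P_4$ permuting $C_1,\dots,C_4$ the way $\pi$ permutes $1,\dots,4$; since $S_4$ is $2$-transitive, verifying the claim for one ordered pair $(i,j)$ — which is immediate from Figure~3.1 — suffices.

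I expect the only real friction in writing this up to be bookkeeping: fixing the identification of the abstract classes $C^{(\sigma)}$ with the relabelled cycles $C_1,\dots,C_4$ of Figure~3.1 and tracking which vertices of $C_i$ serve as $u_c,u_d$ in each of the two length-$3$ paths. The mathematical content is the single observation that $R_2R_3R_2$ and $R_3R_2R_3$ both fix the first coordinate of $i_1i_2i_3i_4$; once the four $6$-cycles and the twelve $R_4$-edges are written down, the rest is routine verification.
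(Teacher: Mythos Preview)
Your argument is correct. The key computation --- that $R_2R_3R_2$ and $R_3R_2R_3$ both send $i_1i_2i_3i_4$ to $i_1i_3i_2i_4$, so antipodal vertices on a $6$-cycle share their first symbol and hence their $R_4$-edges land in the same class --- is sound, and from it the Observation follows exactly as you say: the two endpoints $u_a,u_b$ of $E_{i,j}$ on $C_i$ are antipodal, so the two interior vertices $u_c,u_d$ of either length-$3$ arc lie in the two remaining antipodal pairs and therefore connect via $R_4$ to the two $6$-cycles other than $C_i$ and $C_j$.

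The paper, however, does not prove this at all: it simply records the Observation and writes ``(See Figure 3.1)'', treating it as something to be read off the explicit labelled drawing after the vertices have been renamed. So your route is genuinely different --- you give an intrinsic algebraic argument via the braid relation in $S_3$ acting on the first three coordinates, whereas the paper relies on direct inspection of the concrete graph. Your approach has the advantage of explaining \emph{why} the pattern holds (and of surviving without the figure); the paper's approach has the advantage of being immediate once the picture is in front of the reader, with no bookkeeping about identifying the abstract classes $C^{(\sigma)}$ with the relabelled $C_1,\dots,C_4$. Your closing remark about the $S_4$-relabelling automorphism reducing everything to a single pair $(i,j)$ is a clean compromise between the two.
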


\begin{figure}[ht]
\center
\includegraphics[scale=1.0]{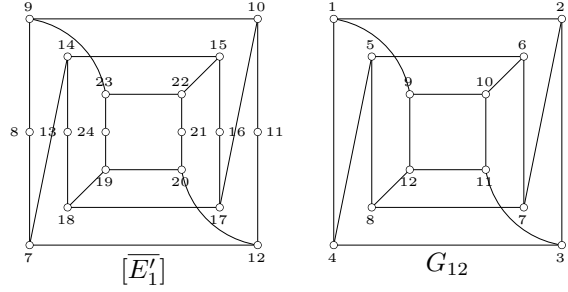}
\caption{$[\overline{E'_{i}}]$ is homeomorphic to $G_{12}$}
\label{Figure: G_{12}'}
\end{figure}

Since $[\overline{E'_{i}}]$ is homeomorphic to $G_{12}$ (See Figure
\ref{Figure: G_{12}'}). By Lemmas \ref{Lemma D_{12}}, \ref{Lemma
D'_{12}} and \ref{Lemma D''_{12}}, we have

\begin{lemma}\label{Lemma [E'_{i}]}
For $i=1,2,3,4,$\\
$1)$ Let $D$ be an arbitrarily drawing of
$[\overline{E'_{i}}]$, then $\nu(D)\geq 2.$\\
$2)$ Let $D$ be a drawing of $[\overline{E'_{i}}]$, where
at least one pair of $6$-cycles crosses each other, then $\nu(D)\geq 3.$\\
$3)$ Let $D$ be a drawing of $[\overline{E'_{i}}]$, where any pair
of $6$-cycles does not cross each other and any two 6-cycles lie in
the same side of the third $6$-cycle, then $\nu(D)\geq 3.$
\end{lemma}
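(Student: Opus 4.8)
The plan is to read Lemma~\ref{Lemma [E'_{i}]} as a direct transcription of Lemmas~\ref{Lemma D_{12}}, \ref{Lemma D'_{12}} and \ref{Lemma D''_{12}} across the homeomorphism $[\overline{E'_{i}}]\cong G_{12}$ recorded above (see Figure~\ref{Figure: G_{12}'}); the work is only to make that homeomorphism explicit enough to carry the three hypotheses along. I would first describe it. The graph $[\overline{E'_{i}}]$ is the union of the three $6$-cycles $C_{j}$ ($j\neq i$) together with the six edges $\bigcup_{j,k\neq i,\, j\neq k}E_{j,k}$ joining them. Inside $[\overline{E'_{i}}]$ every vertex of $C_{j}$ keeps degree $3$, except the two $C_{j}$-endpoints of the edges of $E_{i,j}$, whose opposite ends were deleted together with $V_{i}$; these two vertices have degree $2$, and by the Observation above (applied with the roles of $i$ and $j$ interchanged) they lie at distance $3$ on $C_{j}$, i.e.\ are antipodal, and their common neighbours on $C_{j}$ are non-adjacent vertices of degree $3$. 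Suppressing all six degree-$2$ vertices therefore collapses each $C_{j}$ onto a $4$-cycle and identifies $[\overline{E'_{i}}]$ with $G_{12}$, the $4$-cycles $C^{4}_{1},C^{4}_{2},C^{4}_{3}$ corresponding to the cycles $C_{j}$ ($j\neq i$) and the six remaining edges of $G_{12}$ corresponding to $\bigcup E_{j,k}$.

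The second step is to transfer drawings. A drawing $D$ of $[\overline{E'_{i}}]$ becomes, after contracting the subdivided edges, a drawing $D'$ of $G_{12}$ with $\nu(D')=\nu(D)$; each $6$-cycle $C_{j}$ is traced in $D$ by the same closed curve that traces the corresponding $4$-cycle in $D'$, and the edges of that $4$-cycle are exactly the images of the edges of $C_{j}$. Part~1) is then immediate: $D'$ is a drawing of $G_{12}$, so $\nu(D)=\nu(D')\geq 2$ by Lemma~\ref{Lemma D_{12}}. For part~2), ``some pair of $6$-cycles crosses in $D$'' holds exactly when ``some pair of $4$-cycles crosses in $D'$'', so $\nu(D)=\nu(D')\geq 3$ by Lemma~\ref{Lemma D'_{12}}. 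For part~3), since the curve of $C_{j}$ in $D$ is the curve of the corresponding $4$-cycle in $D'$, the hypothesis ``no pair of $6$-cycles crosses, and any two $6$-cycles lie on the same side of the third'' becomes verbatim the hypothesis of Lemma~\ref{Lemma D''_{12}} for $D'$, and therefore $\nu(D)=\nu(D')\geq 3$.

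I do not expect the inequalities themselves to be the obstacle --- they are supplied wholesale by the three earlier lemmas. The care goes into two smaller matters. The first is the combinatorial bookkeeping of the homeomorphism: confirming that it is precisely the two vertices of $C_{j}$ whose external edge runs to $C_{i}$ that drop to degree $2$, and invoking the Observation above to see that these are antipodal on $C_{j}$, so that each $C_{j}$ really collapses to a $4$-cycle and the quotient is $G_{12}$ rather than some other graph. The second is a technicality: contracting a degree-$2$ vertex $v$ of $C_{j}$, whose two $C_{j}$-neighbours are non-adjacent, can turn a bona fide crossing of the good drawing $D$ into a crossing of two \emph{adjacent} edges of $D'$ (necessarily a self-crossing of the $4$-cycle coming from $C_{j}$), so $D'$ need not be good even though $D$ is. This is handled by the standard fact that any drawing can be made good without increasing the number of crossings, together with the remark that such a self-crossing is in any case permitted in Lemma~\ref{Lemma D''_{12}}; either way the cycle-constraints used in parts~2) and 3) are left intact. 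In short the proof reduces to checking that the above homeomorphism sends the hypotheses of Lemma~\ref{Lemma [E'_{i}]} to those of Lemmas~\ref{Lemma D_{12}}--\ref{Lemma D''_{12}}.
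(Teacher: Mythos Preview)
Your proposal is correct and follows exactly the paper's approach: the paper proves this lemma in one line, simply by citing the homeomorphism $[\overline{E'_{i}}]\cong G_{12}$ (Figure~\ref{Figure: G_{12}'}) and invoking Lemmas~\ref{Lemma D_{12}}, \ref{Lemma D'_{12}} and \ref{Lemma D''_{12}}. You have spelled out the homeomorphism and the transfer of drawings and hypotheses in more detail than the paper does (and your caution about whether the suppressed drawing $D'$ remains good is a genuine technicality the paper silently ignores), but the underlying argument is identical.
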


\begin{lemma}\label{Lemma P_{2}}
 Let $D$ be a drawing of $P_4$, where at least two pairs
of $6$-cycles cross each other, then $\nu(D)\geq 6$.
\end{lemma}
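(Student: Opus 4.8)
The plan is to exploit the structure of the four $6$-cycles $C_1,C_2,C_3,C_4$ in $P_4$ together with the fact, recorded in Lemma \ref{Lemma [E'_{i}]}, that for each $i$ the subgraph $[\overline{E'_i}]$ (the copy of $G_{12}$ obtained by deleting $C_i$ and all edges incident to $C_i$) already forces at least two crossings, and in fact at least three under the mild configurational hypotheses of parts $2)$ and $3)$. Suppose $D$ is a drawing in which at least two pairs of $6$-cycles cross. I would first partition the crossings of $D$ by how many of the cycles $C_1,\dots,C_4$ are incident to the crossing pair of edges: write $E(P_4)=\bigcup_{i=1}^4 E_i \cup \bigcup_{i<j} E_{i,j}$, so every edge either lies on some $C_i$ or is a ``connecting'' edge in some $E_{i,j}$. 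The key accounting device is that a crossing among edges of $[\overline{E'_i}]$ is a crossing that does \emph{not} involve any edge of $E'_i=E_i\cup\bigcup_{j\neq i}E_{i,j}$; hence $\nu_D(\overline{E'_i})\le \nu(D)-\bigl(\text{crossings on edges of }E'_i\bigr)$.

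The main case split is on whether some $6$-cycle, say $C_i$, is crossed in $D$. \textbf{Case A: every $C_i$ is clean as a cycle} — i.e. no edge of any $E_i$ is crossed (in particular no $6$-cycle crosses another $6$-cycle, contradicting the hypothesis). So in fact the hypothesis guarantees we are \emph{not} in this degenerate case, and some $E_i$-edge is crossed. \textbf{Case B: some pair $C_a,C_b$ of $6$-cycles crosses, and by hypothesis a second pair crosses too.} Here I would argue that each crossing between two of the $6$-cycles, say a crossing of an edge of $C_a$ with an edge of $C_b$, is ``charged'' to the two subgraphs $[\overline{E'_i}]$ with $i\notin\{a,b\}$ — for each such $i$, the cycles $C_a,C_b$ both survive in $[\overline{E'_i}]$, so this is a genuine pair of $6$-cycles crossing inside $[\overline{E'_i}]$, whence part $2)$ of Lemma \ref{Lemma [E'_{i}]} gives $\nu_D(\overline{E'_i})\ge 3$. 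Summing the inequalities $\nu(D)\ge \nu_D(E'_i)+\nu_D(\overline{E'_i})$ over the two (or more) such indices $i$, together with part $1)$ or $2)$ applied to the other indices, and using that the crossings counted inside different $\overline{E'_i}$'s overlap only in controlled ways, I expect to reach $\nu(D)\ge 6$.

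Concretely: let the two crossing pairs be $(C_a,C_b)$ and $(C_c,C_d)$. If $\{a,b\}\cap\{c,d\}=\emptyset$, then $\{a,b,c,d\}=\{1,2,3,4\}$; the pair $(C_a,C_b)$ survives in both $[\overline{E'_c}]$ and $[\overline{E'_d}]$, and the pair $(C_c,C_d)$ survives in both $[\overline{E'_a}]$ and $[\overline{E'_b}]$, so part $2)$ gives $\nu_D(\overline{E'_i})\ge 3$ for \emph{all four} $i$. Since $\sum_{i=1}^4 \nu_D(\overline{E'_i})$ counts each crossing of $D$ at most a bounded number of times (each crossing involves edges lying in at most two of the four ``deleted'' families, so it is counted in at least two of the $\overline{E'_i}$; one checks the precise multiplicity from the Observation and Figure 3.1), a short counting argument yields $\nu(D)\ge 6$. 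If the two crossing pairs share a cycle, say $\{a,b\}\cap\{c,d\}=\{a\}$, the same idea applies with $[\overline{E'_b}],[\overline{E'_c}],[\overline{E'_d}]$, in each of which at least one of the two crossing pairs survives.

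The hard part will be the bookkeeping in this last step: making precise how many times a single crossing of $D$ can be counted among $\nu_D(\overline{E'_1}),\dots,\nu_D(\overline{E'_4})$, and ruling out the possibility that all the forced crossings pile onto the connecting edges in a way that lets $\nu(D)$ stay below $6$. I would handle this by classifying each edge of $P_4$ by which $\overline{E'_i}$ it belongs to — an edge of $E_k$ lies in $\overline{E'_i}$ iff $i\neq k$ and ($i,k$ are not joined by a connecting edge family that is deleted), and an edge of $E_{k,\ell}$ lies in $\overline{E'_i}$ iff $i\notin\{k,\ell\}$ — and then observing that a crossing of two edges $e,f$ is counted in $\nu_D(\overline{E'_i})$ precisely for those $i$ with both $e,f\in\overline{E'_i}$; since $e$ and $f$ together ``use up'' at most two indices each in the worst case but the relevant configurations here (two $6$-cycles crossing, or a $6$-cycle edge crossing a connecting edge) use up fewer, each crossing is counted at least twice and often three times. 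Combining $2\nu(D)\ge \sum_i \nu_D(\overline{E'_i})$ (or the sharper multiplicity) with the lower bounds $3,3,3,3$ (resp. $3,3,3,2$) from Lemma \ref{Lemma [E'_{i}]} then closes the argument. I would also invoke Lemma \ref{Lemma D'_{12}}-style reasoning (part $3)$) to cover the sub-case where one of the $\overline{E'_i}$ has its three $6$-cycles in the ``nested'' position rather than crossing, so that all four lower bounds are genuinely at least $3$.
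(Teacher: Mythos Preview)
Your plan has the right ingredients but the central counting step is mis-stated and, as written, does not close the argument.

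\medskip
\textbf{The multiplicity inequality is backwards.} You write that ``each crossing is counted at least twice'' and then use $2\nu(D)\ge \sum_i\nu_D(\overline{E'_i})$. The inequality you need is ``at most twice''. In fact neither direction is true uniformly: a self-crossing of some $C_k$ lies in $\overline{E'_i}$ for all three $i\ne k$ (multiplicity $3$), whereas a crossing between an edge of $E_{a,b}$ and an edge of $E_{c,d}$ with $\{a,b,c,d\}=\{1,2,3,4\}$ lies in no $\overline{E'_i}$ (multiplicity $0$). With the correct uniform bound $m(x)\le 3$ you only get $3\nu(D)\ge\sum_i\nu_D(\overline{E'_i})\ge 11$ or $12$, hence $\nu(D)\ge 4$, which is not enough.

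\medskip
\textbf{What actually works.} The paper does not sum over $i$; it fixes a single $i$ and uses
\[
\nu(D)=\nu_D(E'_i)+\nu_D(E'_i,\overline{E'_i})+\nu_D(\overline{E'_i}).
\]
In the shared-cycle case (say $C_1$ crosses both $C_2$ and $C_3$) every one of the $\ge 4$ cycle--cycle crossings involves an edge of $E_1\subset E'_1$, so these crossings are disjoint from those counted by $\nu_D(\overline{E'_1})$. Lemma~\ref{Lemma [E'_{i}]} then gives $\nu(D)\ge 4+2=6$ (or $4+3=7$ when part $3)$ applies). Your summing hides exactly this disjointness: once you observe that at most one of the five hypothetical crossings can lie in $\overline{E'_1}$, you are using the paper's single-index argument, not the sum.

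\medskip
\textbf{The disjoint-pairs case.} Here ($C_1$--$C_2$ and $C_3$--$C_4$ cross) the single-index trick gives only $\nu(D)\ge 2+3=5$, and the paper finishes with a separate geometric observation: assuming $\nu(D)\le 5$, all edges of $E_{1,j}$ must be clean, but then the position of $C_4$ forces an edge of $E_{1,4}$ to be crossed. Your proposal never supplies this step. Interestingly, your summing idea \emph{can} be made to work here, and more cleanly than the paper: all four $\nu_D(\overline{E'_i})\ge 3$ so $\sum_i\nu_D(\overline{E'_i})\ge 12$, while under $\nu(D)=5$ the four guaranteed $C_a$--$C_b$ crossings each have multiplicity exactly $2$ and the fifth has multiplicity $\le 3$, giving $\sum\le 11$, a contradiction. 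But to see this you must track multiplicities crossing-by-crossing, not assert a uniform bound.

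\medskip
In short: fix the direction of the inequality, abandon the hope of a uniform multiplicity bound, and either (i) follow the paper's single-index decomposition plus the extra geometric step in the disjoint case, or (ii) carry out the multiplicity bookkeeping honestly, which does succeed but requires exactly the case analysis you were trying to avoid.
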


\begin{proof}By contradiction. Suppose  $\nu(D)\leq 5$. Since
 each pair of 6-cycles crossing each other will produce at least two
crossings, there are at most two pairs of 6-cycles crossing each
other. By symmetry, there are two cases:

\noindent \textbf{Case 1.} $C_1$ crosses $C_2$ and $C_3$. There are
two subcases depending on $C_4$'s position.

\noindent \textbf{Case 1.1.} $C_4$ lies in outside of $C_2$ and
$C_3$ (See Figure\ref{Figure P_4-2}(1)). By Lemma\ref{Lemma
[E'_{i}]}, $\nu_D(\overline{E'_{1}})\geq 3$. Then, it follows
$\nu(D)\geq 3+4=7$, a contradiction.

\noindent \textbf{Case 1.2.} $C_4$ lies in inside of $C_3$($C_2$)
(See Figure\ref{Figure P_4-2}(2) and (3)). By Lemma\ref{Lemma
[E'_{i}]}, $\nu_D(\overline{E'_{1}})\geq 2$. Then, it follows
$\nu(D)\geq 2+4=6$, a contradiction.

\noindent \textbf{Case 2.} $C_1$ crosses $C_2$, $C_3$ crosses $C_4$.
By symmetry, we need only consider the case that $C_3$ and $C_4$ lie
in outside of $C_1$ and $C_2$ (See Figure\ref{Figure P_4-2}(4)). By
Lemma\ref{Lemma [E'_{i}]}, $\nu_D(\overline{E'_{1}})\geq 3$. Since
$\nu(D)\leq 5$, $C_1$ does not cross itself, any edge of
$\bigcup_{j=2,3,4} E_{1,j}$ is clean. It follows edges $v_{1}v_{13}$
and $v_4v_{16}$ are clean. Now at least one edge of $E_{1,4}$ is
crossed, it contradicts any edge of $\bigcup_{j=2,3,4} E_{1,j}$ is
clean (See Figure\ref{Figure P_4-2}(5)).

\begin{figure}[ht]
\center
\includegraphics[scale=1.0]{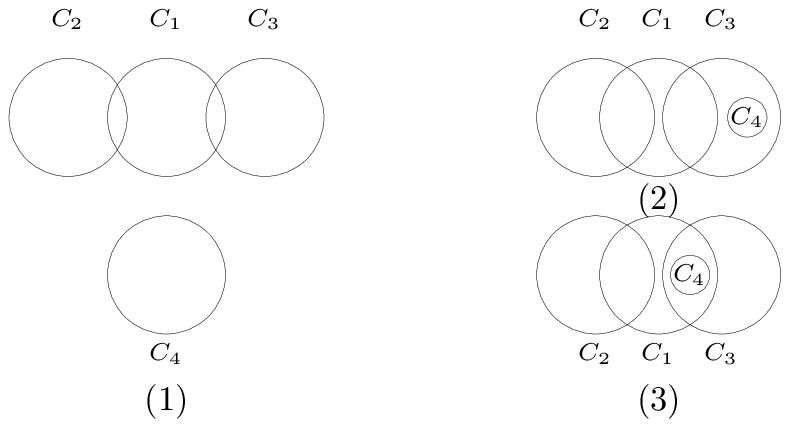}
\includegraphics[scale=1.0]{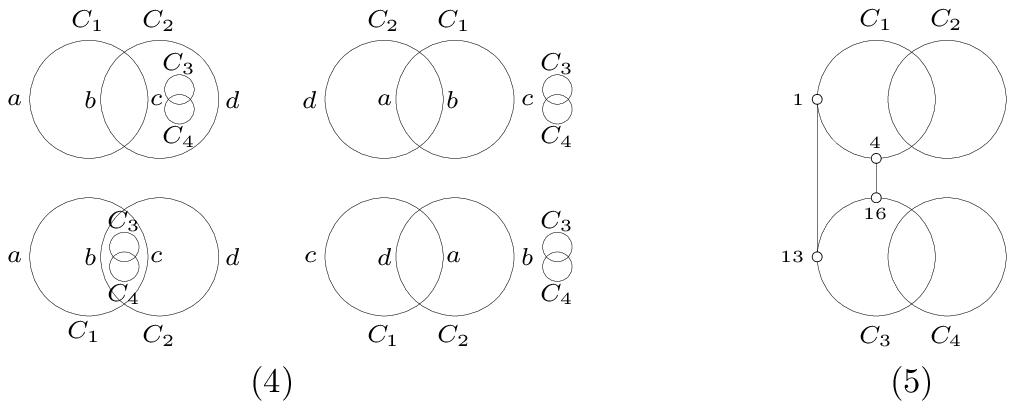}
\caption{Some Drawings of $P_4$, where just two pairs of 6-cycles
cross each other}\label{Figure P_4-2}
\end{figure}

\end{proof}

\begin{lemma}\label{Lemma P_{1}}
 Let $D$ be a drawing of $P_4$, where just one pair
of $6$-cycles crosses each other, then $\nu(D)\geq 6$.
\end{lemma}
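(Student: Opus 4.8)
The plan is a proof by contradiction. Assume $\nu(D)\le 5$. By the symmetry of $P_4$ we may take $\{C_1,C_2\}$ to be the unique crossing pair of $6$-cycles, so $\nu_D(E_i,E_j)=0$ for all $\{i,j\}\ne\{1,2\}$; since $C_1$ and $C_2$ are vertex-disjoint closed curves, $\nu_D(E_1,E_2)$ is even, hence $\ge 2$. Thus at most three crossings of $D$ are not $C_1$-$C_2$ crossings. Throughout I would exploit the four subgraphs $[\overline{E'_i}]$, each homeomorphic to $G_{12}$ with the three $6$-cycles $C_j$ ($j\ne i$) as its three $4$-cycles. The $C_1$-$C_2$ crossings survive in $[\overline{E'_3}]$ and $[\overline{E'_4}]$, so Lemma \ref{Lemma [E'_{i}]}(2) gives $\nu_D(\overline{E'_3})\ge 3$ and $\nu_D(\overline{E'_4})\ge 3$; in $[\overline{E'_1}]$ and $[\overline{E'_2}]$ no two of the three $6$-cycles cross, so Lemma \ref{Lemma [E'_{i}]}(1) gives $\nu_D(\overline{E'_1}),\nu_D(\overline{E'_2})\ge 2$, and a crossing counted in $\nu_D(\overline{E'_1})$ (resp. $\nu_D(\overline{E'_2})$) involves no edge incident with $C_1$ (resp. $C_2$), hence is disjoint from the $C_1$-$C_2$ crossings.

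First I would dispose of the regime $\nu_D(E_1,E_2)\ge 3$. If $\nu_D(E_1,E_2)\ge 4$, then $\nu(D)\ge 4+\nu_D(\overline{E'_1})\ge 6$. If $\nu_D(E_1,E_2)=3$, then $\nu(D)\ge 3+\nu_D(\overline{E'_1}\cup\overline{E'_2})\ge 5$ forces equality, so \emph{all} non-$C_1$-$C_2$ crossings must lie in $\overline{E'_1}\cap\overline{E'_2}=E_3\cup E_4\cup E_{3,4}$; in particular neither $C_1$ nor $C_2$ self-crosses, so $C_1$ and $C_2$ are disjoint simple closed curves meeting an odd number of times, which is impossible. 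Hence from now on $\nu_D(E_1,E_2)=2$, and each of $[\overline{E'_3}]$ and $[\overline{E'_4}]$ then carries at least one crossing beyond those two.

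The core is a case analysis on the placement of $C_3$ and $C_4$ relative to the region filled by $C_1\cup C_2\cup E_{1,2}$ and to each other, run in the spirit of the proofs of Lemmas \ref{Lemma D'_{12}}, \ref{Lemma D''_{12}} and \ref{Lemma P_{2}}. I would organize it by: whether $C_3$ or $C_4$ self-crosses or a $C_3$-$C_4$ connector is crossed (equivalently, whether the crossings counted by $\nu_D(\overline{E'_1})$ and by $\nu_D(\overline{E'_2})$ can overlap inside $E_3\cup E_4\cup E_{3,4}$); and, when they cannot, whether $C_3,C_4$ lie outside both $C_1,C_2$, inside one of them, or inside the same one. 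In the outside subcases the two connectors from $C_k$ to $C_1$ and the two from $C_k$ to $C_2$ must leave that region, and tracing them past the two $C_1$-$C_2$ crossings forces two more crossings on edges that the budget of five has declared clean. In the nested subcases one either recognizes the hypothesis of Lemma \ref{Lemma [E'_{i}]}(3) in some $[\overline{E'_k}]$ and upgrades that bound from $2$ to $3$, or makes a short planarity count for $C_3\cup C_4$ together with their four connectors into $C_1\cup C_2$, which sit in a disc bounded by a simple subarc of the $C_1\cup C_2$ drawing. In every case one checks that the contributions from $\nu_D(E_1,E_2)$, from the extra crossings of $\overline{E'_3}$ and $\overline{E'_4}$, and from $\nu_D(\overline{E'_1})$ and $\nu_D(\overline{E'_2})$ are pairwise distinct and sum to at least six.

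The main obstacle is precisely this distinctness accounting under a tight budget. The extra crossing forced in $\overline{E'_3}$ may coincide with the one forced in $\overline{E'_4}$ and with a self-crossing of $C_1$ or $C_2$; the two crossings forced in $\overline{E'_1}$ may coincide with the two forced in $\overline{E'_2}$ inside $E_3\cup E_4\cup E_{3,4}$; and an extra crossing of $\overline{E'_3}$ can also be counted by $\nu_D(\overline{E'_1})$ or $\nu_D(\overline{E'_2})$ only if it is confined to $E_2\cup E_4\cup E_{2,4}$ or to $E_1\cup E_4\cup E_{1,4}$. Excluding the simultaneous worst case of all these coincidences is what forces one into the positional (clean-edge and Jordan-curve) arguments, and is why the case list here is longer and more delicate than in Lemma \ref{Lemma P_{2}}.
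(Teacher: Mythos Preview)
Your framework is broadly the same as the paper's---contradiction with $\nu(D)\le 5$, symmetry to fix $\{C_1,C_2\}$ as the crossing pair, use of the four subgraphs $[\overline{E'_i}]$ via Lemma~\ref{Lemma [E'_{i}]}, and a positional case split on $C_3,C_4$---but you overcomplicate the execution and leave the heart of the argument as a sketch.

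Two concrete remarks. First, your displayed inequality in the $\nu_D(E_1,E_2)=3$ subcase,
\[
\nu(D)\ \ge\ 3+\nu_D\bigl(\overline{E'_1}\cup\overline{E'_2}\bigr),
\]
double-counts: since $E_1\subseteq\overline{E'_2}$ and $E_2\subseteq\overline{E'_1}$, every $C_1$--$C_2$ crossing already lies in $\nu_D(\overline{E'_1}\cup\overline{E'_2})$. Your intended conclusion (that equality forces the two non-$C_1$--$C_2$ crossings into $E_3\cup E_4\cup E_{3,4}$, whence $C_1,C_2$ are simple and the parity contradiction) is recoverable, but the inequality as written is false.

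Second, and more important, the paper shows the ``core'' you describe is far shorter than you anticipate. After fixing $C_3$ outside $C_1\cup C_2$ by symmetry, there are only three positions for $C_4$, each handled in a couple of lines by \emph{directly} counting edges forced to cross a single cycle (Jordan curve argument) and invoking \emph{one} $\nu_D(\overline{E'_i})$ bound---never two at once. For instance, if $C_4$ lies inside $C_3$, the four edges of $E_{1,4}\cup E_{2,4}$ each cross $C_3$, and together with $\nu_D(E_1,E_2)\ge 2$ this already gives six; no $[\overline{E'_i}]$ is needed. If $C_4$ lies inside $C_2$, the two edges of $E_{3,4}$ each cross $C_2$, and these crossings, the $C_1$--$C_2$ crossings, and $\nu_D(\overline{E'_2})\ge 2$ are pairwise disjoint for the trivial reason that the first two sets involve an edge of $E_2$ while the third does not. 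The ``all outside'' case uses part~(3) of Lemma~\ref{Lemma [E'_{i}]} to get $\nu_D(\overline{E'_1})\ge 3$, which with $\nu(D)\le 5$ forces every edge of $E_{1,2}\cup E_{1,3}\cup E_{1,4}$ clean and $C_1$ simple; one then exhibits a single edge of $E_{1,4}$ that cannot be clean. So the elaborate ``distinctness accounting'' you set up---juggling the extra crossings of $\overline{E'_3}$ and $\overline{E'_4}$ against those of $\overline{E'_1}$ and $\overline{E'_2}$---is unnecessary, and your preliminary reduction of the $\nu_D(E_1,E_2)\ge 3$ regime is also not needed. Your plan is not wrong, but as it stands it is incomplete, and the route you chart is substantially longer than the one the paper actually takes.
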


\begin{proof}By contradiction. Suppose  $\nu(D)\leq 5$. By symmetry,
we need only consider the case that $C_3$ lie in outside of $C_1$
and $C_2$ (See Figure\ref{Figure P_4-1}(1)). There are three cases
depending on $C_4$'s position:

\noindent \textbf{Case 1.} $C_4$ lies in inside of $C_3$ (See
Figure\ref{Figure P_4-1}(2)). Then each edge of $E_{4,1}$ crosses
the edges of $E_{3}$ at least one time, each edge of $E_{4,2}$
crosses the edges of $E_{3}$ at least one time. It follows
$\nu(D)\geq 2+2+2=6$, a contradiction.

\noindent \textbf{Case 2.} $C_4$ lies in inside of $C_2$ (See
Figure\ref{Figure P_4-1}(3), (4)). Then each edge of $E_{4,3}$
crosses the edges of $E_{2}$ at least one time. By Lemma\ref{Lemma
[E'_{i}]}, $\nu_D(\overline{E'_{2}})\geq 2$. It follows $\nu(D)\geq
2+2+2=6$, a contradiction.

\noindent \textbf{Case 3.} $C_4$ lies in outside $C_1$, $C_2$ and
$C_3$ (See Figure\ref{Figure P_4-1}(5)). By Lemma\ref{Lemma
[E'_{i}]}, $\nu_D(\overline{E'_{1}})\geq 3$. Since $\nu(D)\leq 5$,
$C_1$ does not cross itself, any edge of $\bigcup_{j=2,3,4} E_{1,j}$
is clean. It follows edges $v_{1}v_{13}$ and $v_4v_{16}$ are clean.
Now at least one edge of $E_{1,4}$ is crossed, it contradicts any
edge of $\bigcup_{j=2,3,4} E_{1,j}$ is clean.

\begin{figure}[ht]
\center
\includegraphics[scale=1.0]{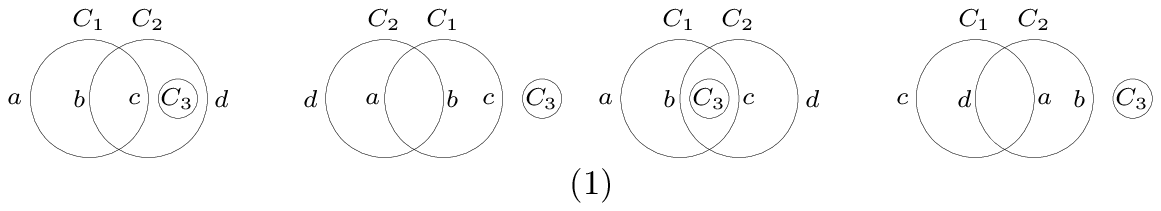}
\includegraphics[scale=1.0]{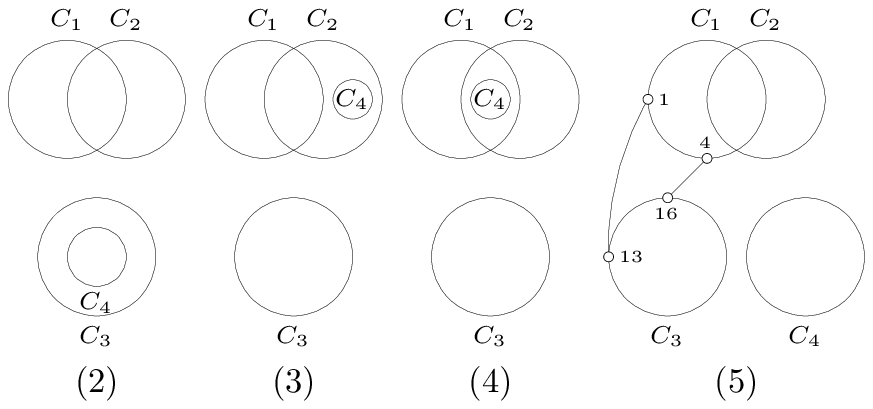}
\caption{Some Drawings of $P_4$, where just one pair of 6-cycles
crosses each other}\label{Figure P_4-1}
\end{figure}

\end{proof}

\begin{lemma}\label{Lemma P_{0}}
 Let $D$ be a drawing of $P_4$, where any pair of $6$-cycles
does not cross each other, then $\nu(D)\geq 6$.
\end{lemma}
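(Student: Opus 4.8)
The plan is to suppose $\nu(D)\le 5$ and reach a contradiction by examining how the four pairwise non-crossing $6$-cycles $C_1,\dots,C_4$ lie in the plane. I would first treat separately the drawings in which some $C_i$ crosses itself (see the last paragraph) and otherwise assume each $C_i$ is a simple closed curve. On the sphere, four pairwise disjoint simple closed curves cut the surface into five faces, so their dual tree $T$ (a node per face, an edge per curve) has five nodes and is either a path, a \emph{spider} (a single node of degree $3$), or the star $K_{1,4}$. Call an index $m$ \emph{coherent} if among the $6$-cycles $C_j$ with $j\ne m$ no one of them contains exactly one of the other two in its interior --- equivalently, the three edges of $T$ other than $C_m$ lie in a single component of $T-C_m$. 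When $m$ is coherent, the hypothesis of Lemma~\ref{Lemma [E'_{i}]}(3) holds for the copy of $G_{12}$ formed by $[\overline{E'_m}]$, so $\nu_D(\overline{E'_m})\ge 3$; for every $m$, Lemma~\ref{Lemma [E'_{i}]}(1) gives $\nu_D(\overline{E'_m})\ge 2$. A direct inspection of the three shapes of $T$ shows the number of coherent indices is $0$ for a path, $2$ for a spider, and $4$ for the star.

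Suppose $T$ is a path. Then no index is coherent, and (because $T$ is a path) for each $m$ some member of $\{C_j:j\ne m\}$ separates the other two, giving four separations ``$C_a$ separates $C_b$ from $C_c$'' with $\{a,b,c\}=\{1,2,3,4\}\setminus\{m\}$. The four triples $\{a,b,c\}$ are the four distinct $3$-subsets of $\{1,2,3,4\}$, so the four separations are distinct relations; each forces both edges of $E_{b,c}$ to cross the curve $C_a$, i.e. two crossings lying at edge-pairs of the form (an edge of $E_{b,c}$, an edge of $E_a$), the two within one separation already distinct since the edges of $E_{b,c}$ are. Since the triples are distinct these $4\times 2$ edge-pairs are pairwise distinct, so $\nu(D)\ge 8$, a contradiction. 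Suppose instead $T$ is a spider. Its unique internal edge is a curve that separates two of the remaining hexagons, say $C_b$ and $C_c$, from the fourth; denote the internal curve by $C_a$ and the fourth hexagon by $C_d$, and note that $d$ is a coherent index. The four edges in $E_{d,b}\cup E_{d,c}$ are forced to cross $C_a$; since each of them lies in $E'_d$ while the edges of $C_a$ lie in $\overline{E'_d}$, these four crossings contribute to $\nu_D(E'_d,\overline{E'_d})$, and with $\nu_D(\overline{E'_d})\ge 3$ we get $\nu(D)\ge 3+4=7$, again a contradiction.

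The crux is the case $T=K_{1,4}$, where all four indices are coherent, and here the bound falls out of an averaging argument. We have $\nu_D(\overline{E'_m})\ge 3$ for each $m$, so $\sum_{m=1}^4 \nu_D(\overline{E'_m})\ge 12$. Attach to each crossing of $D$ the set $I$ of cycle-indices it meets, a hexagon edge of $E_i$ carrying the index $i$ and a connecting edge of $E_{i,j}$ carrying the indices $i$ and $j$; because the $C_i$ are simple and pairwise disjoint, $|I|\in\{2,3,4\}$, and a crossing with index-set $I$ is counted in exactly $4-|I|$ of the quantities $\nu_D(\overline{E'_m})$, namely those $m$ lying outside $I$. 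Writing $a,b,c$ for the numbers of crossings with $|I|=2,3,4$, the displayed inequality becomes $2a+b\ge 12$, whence $\nu(D)=a+b+c\ge a+b\ge 6$, since the minimum of $a+b$ subject to $2a+b\ge 12$ and $a,b\ge 0$ is $6$. This is the final contradiction.

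I expect the difficulty to be organizational rather than conceptual, as is typical for this kind of theorem: one must confirm from $T$ that the count of coherent indices is exactly $0/2/4$ and verify in the path and spider cases that the separations and forced crossings are exactly as claimed, which --- as in the proofs of Lemmas~\ref{Lemma D'_{12}} and~\ref{Lemma D''_{12}} --- is best supported by a handful of figures. The one genuine loose end is the self-crossing reduction: if some $C_i$ is not simple, ``interior'' and the tree $T$ have to be reinterpreted, but a self-crossing is itself a crossing, with index-set of size $1$, hence counted in $3$ of the $\nu_D(\overline{E'_m})$ and once in $\nu(D)$, so after a short case split on how many $6$-cycles cross themselves the same counting can be rerun; I would carry this out first.
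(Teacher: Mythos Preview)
Your proof is essentially correct for drawings in which every $C_i$ is a simple closed curve, and it follows a genuinely different route from the paper. The paper proceeds by an explicit, picture-driven case analysis: its Case~1 (some $C_j$ nested inside another) corresponds to your path and spider cases and reaches the same bounds $8$ and $7$ by the same forced-crossing counts; its Case~2 (each $C_i$ outside the other three, your star case) is handled not by averaging but by fixing one hexagon, say $C_3$, bounding $\nu_D(E'_3)+\nu_D(E'_3,\overline{E'_3})\le 2$ from $\nu_D(\overline{E'_3})\ge 3$, and then tracking specific connecting edges to force a third crossing of that type. Your double-counting argument in the star case---$\sum_m\nu_D(\overline{E'_m})\ge 12$ together with the observation that every crossing has $|I|\ge 2$ and is therefore counted at most twice---is cleaner and avoids that ad hoc edge-chasing; it would also transfer more readily to other graphs with the same four-piece decomposition. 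One small slip: your ``equivalently'' is wrong. For a path $T$ with leaf edge $C_m$, the three remaining edges do lie in a single component of $T-C_m$, yet $m$ is not coherent (a middle edge still separates the other two). Your stated counts $0/2/4$ are nonetheless correct for the original definition, so the argument survives; just drop or fix the tree reformulation.

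The loose end you flag is a real gap, and the sketch you give does not close it. If some $C_i$ crosses itself you lose the tree $T$ entirely, so the path/spider/star trichotomy and the coherence notion are no longer available; you cannot simply ``rerun the counting''. Even in the most favourable scenario where you could still assert $\nu_D(\overline{E'_m})\ge 3$ for all $m$, a self-crossing has $|I|=1$ and is counted three times in $\sum_m\nu_D(\overline{E'_m})$, so the inequality $3s+2a+b\ge 12$ is already met by $s=4$ and yields only $\nu(D)\ge 4$. The paper confronts this directly: in its star case it splits on whether a chosen hexagon $C_3$ crosses itself (Case~2.2) and, when it does, uses the resulting two-lobed shape of $C_3$ together with the clean placement of $E_{1,3}$ and $E_{2,3}$ to force extra crossings on $E_{1,4}$, again reaching $\nu(D)\ge 6$. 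You will need an argument of comparable specificity here; the averaging alone is not enough once $|I|=1$ terms appear.
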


\begin{proof}By contradiction. Suppose  $\nu(D)\leq 5$.

\noindent \textbf{Case 1.} $C_2$ lies in inside of $C_1$, $C_3$ and
$C_4$ lie in outside of $C_1$.

\noindent \textbf{Case 1.1.} $C_4$ lies in inside of $C_3$. Then
each edge of $E_{2,4}$ crosses the edges of $E_{1}$ at least one
time, crosses the edges of $E_{3}$ at least one time. Meanwhile,
each edge of $E_{2,3}$ crosses the edges of $E_{1}$ at least one
time, and each edge of $E_{4,1}$ crosses the edges of $E_{3}$ at
least one time. It follows $\nu(D)\geq 4+2+2= 8$ (See
Figure\ref{Figure P_4-0}(1)).

\noindent \textbf{Case 1.2.} $C_4$ lies in outside of $C_3$. Then
each edge of $E_{2,3}$ crosses the edges of $E_{1}$ at least one
time, while each edge of $E_{2,4}$ crosses the edges of $E_{1}$ at
least one time. By Lemma\ref{Lemma [E'_{i}]},
$\nu_D(\overline{E'_{1}})\geq 3$. Then, it follows $\nu(D)\geq
3+4=7$, a contradiction. (See Figure\ref{Figure P_4-0}(2)).

\noindent \textbf{Case 2.} each $C_i$ lies in outside of other three
$C_j$($1\leq j\leq 4, j\neq i$).

\noindent \textbf{Case 2.1.} $C_3$ does not cross itself. Since
$\nu(D)\leq5$ and $\nu_D(\overline{E'_{3}})\geq 3$,
$\nu_D(E'_3)+\nu_D(E'_3,\overline{E'_{3}})\leq 2$. It follows that
$\nu_D(E_{1,3})+\nu_D(E_{3},E_{1,3})+\nu_D(E_{2,3})+\nu_D(E_{3},E_{2,3})+\nu_D(E_{3,4})+\nu_D(E_{3},E_{3,4})\leq2$.
Without loss of generality, we may assume
$\nu_D(E_{1,3})+\nu_D(E_{3},E_{1,3})=0$ (See Figure\ref{Figure
P_4-0}(3)). Then at least one edge of $E_{3,2}$ crosses one edge of
$E_1\cup E_3\cup E_{1,3}$. After that, at least one edge of
$E_{3,4}$ crosses one edge of $E_1\cup E_3\cup E_{1,3}$ and at least
one edge of $E_{3,4}$ crosses one edge of $E_2\cup E_3\cup E_{2,3}$.
It follows $\nu_D(E'_3)+\nu_D(E'_3,\overline{E'_{3}})\geq 3$, which
contradicts $\nu_D(E'_3)+\nu_D(E'_3,\overline{E'_{3}})\leq 2$.

\noindent \textbf{Case 2.2.} $C_3$ crosses itself. Since
$\nu(D)\leq5$ and $\nu_D(\overline{E'_{3}})\geq 3$,
$\nu_D(E'_3)+\nu_D(E'_3,\overline{E'_{3}})\leq 2$. It follows that
$\nu_D(E_{1,3})+\nu_D(E_{3},E_{1,3})+\nu_D(E_{2,3})+\nu_D(E_{3},E_{2,3})+\nu_D(E_{3,4})+\nu_D(E_{3},E_{3,4})\leq1$
since $C_3$ crosses itself. Without loss of generality, we may
assume
$\nu_D(E_{1,3})+\nu_D(E_{3},E_{1,3})=\nu_D(E_{2,3})+\nu_D(E_{3},E_{2,3})=0$.
If $C_1$ does not cross itself, then the two edges of $E_{1,4}$
cross at least three times in total (See Figure\ref{Figure
P_4-0}(4)). If $C_1$ crosses itself, then each edge of $E_{1,4}$
crosses the edges of $E_2\cup E_3\cup E_{2,3}$ at least one time
(See Figure\ref{Figure P_4-0}(5)). By Lemma\ref{Lemma [E'_{i}]},
$\nu_D(\overline{E'_{1}})\geq 3$. It follows $\nu(D)\geq 3+3=6$, a
contradiction.
%
%
%
%

\begin{figure}[ht]
\center
\includegraphics[scale=1.0]{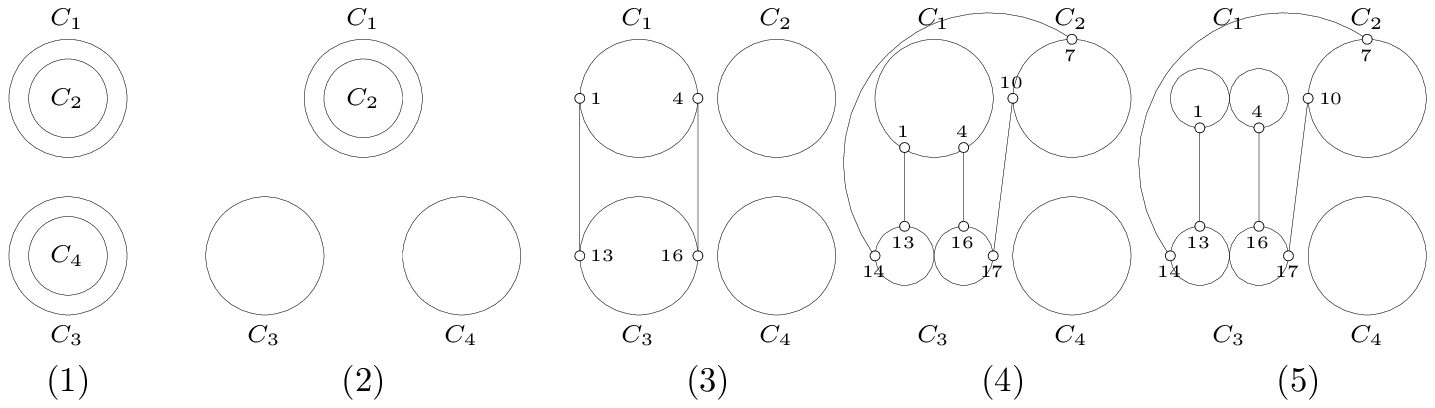}
\caption{Some Drawings of $P_4$, where any pair of $6$-cycles does
not cross each other}\label{Figure P_4-0}
\end{figure}

\end{proof}
By Lemmas \ref{lemma upper bound of P_4}, \ref{Lemma P_{2}} -
\ref{Lemma P_{0}}, we have
\begin{theorem}\label{theorem cr(P4)=6}
$cr(P_4)=6$.
\end{theorem}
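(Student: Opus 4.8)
The plan is to sandwich $cr(P_4)$ between $6$ and $6$. The upper bound $cr(P_4)\leq 6$ is already in hand from Lemma~\ref{lemma upper bound of P_4}, witnessed by the explicit drawing in Figure~3.1, so the real work is the matching lower bound. For that I would fix an arbitrary good drawing $D$ of $P_4$ and classify it by the mutual positions of the four $6$-cycles $C_1,C_2,C_3,C_4$ that partition $V(P_4)$. Any two of these $6$-cycles, regarded as closed curves, either cross each other—necessarily an even, hence $\geq 2$, number of times—or are disjoint in the drawing (one nested inside the other, or side by side). Counting how many of the $\binom{4}{2}=6$ pairs actually cross yields an exhaustive trichotomy: at least two crossing pairs, exactly one crossing pair, or no crossing pair. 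These are precisely the hypotheses of Lemmas~\ref{Lemma P_{2}}, \ref{Lemma P_{1}}, and \ref{Lemma P_{0}}, each of which concludes $\nu(D)\geq 6$. Hence $\nu(D)\geq 6$ for every good drawing, so $cr(P_4)\geq 6$, and combined with the upper bound the theorem follows.

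The engine behind the three auxiliary lemmas is the fact that for each $i$ the subgraph $[\overline{E'_{i}}]$ obtained by deleting $C_i$ together with all edges incident to it is homeomorphic to $G_{12}$, so Lemma~\ref{Lemma [E'_{i}]} supplies a lower bound of $2$, or of $3$ under the extra structural hypotheses inherited from Lemmas~\ref{Lemma D_{12}}, \ref{Lemma D'_{12}}, and \ref{Lemma D''_{12}}, on the crossings confined to $[\overline{E'_{i}}]$. In each branch of the trichotomy I would choose an index $i$ for which the relevant hypothesis of Lemma~\ref{Lemma [E'_{i}]} is met, bank the $2$ or $3$ crossings it guarantees inside $\overline{E'_{i}}$, and then separately account for crossings forced on the edges of $E'_i$—typically the connecting edges $E_{i,j}$ being driven across the boundary curves of nested $6$-cycles, using the Observation about which $6$-cycles the intermediate vertices of $C_i$ are attached to. Adding these disjoint contributions pushes the total to at least $6$.

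The hardest part will be the no-crossing-pair case, Lemma~\ref{Lemma P_{0}}: there all six pairwise relations are nestings, so one must enumerate the possible planar nesting patterns of $C_1,\dots,C_4$ up to the symmetry of $P_4$, and in each configuration show that the edges joining vertices of an inner cycle to vertices of an outer cycle cannot all avoid the intervening $6$-cycles, while also keeping track of whether individual $6$-cycles cross themselves—this is the split into ``$C_3$ does not cross itself'' versus ``$C_3$ crosses itself'' and into which $6$-cycle a given inner cycle sits in. By contrast, once a crossing pair is present the arithmetic is immediate: a crossing pair already costs $2$, a second crossing pair or a suitably chosen $[\overline{E'_{i}}]$ costs $2$ or $3$ more, and the connecting-edge argument closes the remaining gap. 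So the theorem itself is a short assembly step; essentially all the difficulty is pre-packaged into the lemmas it invokes.
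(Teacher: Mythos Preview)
Your proposal is correct and follows exactly the paper's approach: the theorem is a one-line assembly of Lemma~\ref{lemma upper bound of P_4} for the upper bound together with the exhaustive trichotomy Lemmas~\ref{Lemma P_{2}}, \ref{Lemma P_{1}}, \ref{Lemma P_{0}} on the number of mutually crossing pairs of $6$-cycles for the lower bound. Your commentary on the internal structure of those lemmas (the role of $[\overline{E'_i}]\cong G_{12}$ via Lemma~\ref{Lemma [E'_{i}]}, the banking of $2$ or $3$ crossings plus additional crossings from connecting edges, and the nesting analysis in the no-crossing-pair case) accurately summarizes how the paper proves them.
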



\begin{thebibliography}{99}

\bibitem{LHH05}
C.-K. Lin, H.-M. Huang, L.-H. Hsu,
\newblock The super connectivity of the
pancake graphs and the super laceability of the star graphs,
\newblock {\it Theoretical Computer Science} 339 (2005) 257 ¨C 271.

\bibitem{S05}
C.-N. Hung, H.-C. Hsu, K.-Y. Liang, L.-H. Hsu,
\newblock Ring embedding in faulty pancakegraphs,
\newblock {\it Information Processing Letters} 86 (5) 2003 271-275.

\bibitem{DZ12}
Yun-Ping Deng, Xiao-Dong Zhang,
\newblock Automorphism Groups of the Pancake Graphs,
\newblock {\it Information Processing Letters} 112 (7) (2012) 264-266

\bibitem{NB09}
Q. T. Nguyen, S. Bettayeb,
\newblock The upper bound and lower bound of the genus of pancake graphs,
\newblock {\it  Computers and Communications, 2009. ISCC 2009. IEEE Symposium on} (2009): 125-129.

\bibitem{LTHHH09}
C.-K. Lin, J. J.M. Tan, H.-M. Huang, D. F. Hsu, L.-H. Hsu,
\newblock Mutually independent hamiltonian cycles for the pancakegraphs and the star graphs,
\newblock {\it Discrete Math.} 309 (17) (2009): 5474-5483.

\bibitem{C11}
P. E.C. Compeau,
\newblock Girth of pancakegraphs,
\newblock {\it Discrete App. Math.} 159 (15) (2011): 1641-1645.

\bibitem{Bruce95}
A. M. Dean, R. B. Richter,
\newblock The crossing number of $C_4\square C_4$,
\newblock {\it Journal of Graph Theory} 19 (1995): 125-129.

\bibitem{Akers89}
S. B. Akers, B. Krishnamurthy,
\newblock A group-theoretic model for symmetric interconnection networks,
\newblock {\it IEEE Transactions on Computers} 38 (1989): 555-566.

\bibitem{EG73}
P. Erd\H{o}s, R. K. Guy,
\newblock Crossing number problems,
\newblock {\it Amer. Math. Monthly} 80 (1973) 52-58.

\bibitem{Garey83}
M. R. Garey, D. S. Johnson,
\newblock Crossing numbers is NP-complete,
\newblock {\it SIAM J. Alg. Disc. Math.} 4 (1983) 312-316.

\bibitem{Guy60}
R. K. Guy,
\newblock A combinatorial problem,
\newblock {\it Bull. Malayan Math. Soc.} 7 (1960) 68-72.

\bibitem{LinXH09}
X. Lin, Y. Yang, W. Zheng, L. Shi, W. Lu
\newblock The crossing numbers of generalized Petersen graphs with small order,
\newblock {\it Discrete App. Math.} 157 (2009) 1016-1023.

\bibitem{Bruce07}
S. Pan, R. B. Richter,
\newblock The crossing number of $K_{11}$ is 100,
\newblock {\it Journal of Graph Theory} 56 (2007): 128-134.

\bibitem{Tutte70}
W. T. Tutte,
\newblock Toward a theory of crossing numbers,
\newblock {\it J. Combinatorial Theory} 8 (1970) 45-53.

\bibitem{Turan77}
P. Tur\'{a}n,
\newblock A note of welcome,
\newblock {\it J. Graph Theory} 1 (1977) 7-9.

\bibitem{YangYS03}
Y. Yang, Y. Sun, W. Lu,
\newblock Crossing numbers of graphs with at most nine vertices,
\newblock {\it Mini- Microsystems} 24 (6) (2003) 954-958.

\bibitem{YangYS02}
Y. Yang, D. Wang, W. Lu,
\newblock The crossing number of 4-regular Graphs,
\newblock {\it Journal of Software} 13 (12) (2002) 2259-2266.


\bibitem{SV94}
O. S\'{y}kora, I. Vrt'o,
\newblock On VLSI layouts of the star graph and related networks,
\newblock {\it Integration, the VLSI Journal} 17 (1) (1994) 83-93.

\end{thebibliography}
\end{document}